\newtheorem{definition}{Definition}[section]
\newtheorem{lemma}[definition]{Lemma}
\newtheorem{satz}[definition]{Theorem}
\newtheorem{theorem}[definition]{Theorem}
\newtheorem{folgerung}[definition]{Lemma}
\newtheorem{bemerkung}[definition]{Remark}
\newcommand{\obda}{\textnormal{without loss of generality}}
\newcommand{\ohnenull}{\backslash\{0\}}
\newcommand{\ylimit}{y_{\text{\textnormal{\tiny limit}}}}
\newcommand{\flimit}{f_{\text{\textnormal{\tiny limit}}}}
\newcommand{\TO}{\longrightarrow}
\newcommand{\Impl}{\Longrightarrow}
\newcommand{\equi}{\Leftrightarrow}
\newcommand{\Equi}{\Longleftrightarrow}
\newcommand{\Reell}{\mathbb{R}}
\newcommand{\M}{\mathbb{M}^{n\times n}}
\DeclareMathOperator{\Sym}{\mathbb{S}ym}
\DeclareMathOperator{\PSym}{\mathbb{PS}ym}
\newcommand{\abb}[5]{#1\hspace{1 mm}:\hspace{2 mm}#2\hspace{1 mm}\TO\hspace{1 mm}#3\hspace{1 mm},\hspace{2 mm}#4\hspace {1 mm}             \longmapsto\hspace{1 mm}#5} 
\newcommand{\alle}{\forall\,}
\newcommand{\allealle}[2]{\alle #1\;\;\alle #2}
\newcommand{\sprodukt}[2]{\left\langle #1\, ,\hspace*{0.04em} #2\right\rangle}
\DeclareMathOperator{\spn}{\text{\textnormal{span}}}
\DeclareMathOperator{\diag}{\text{\textnormal{diag}}}
\DeclareMathOperator{\Adj}{\text{\textnormal{Adj}}}
\DeclareMathOperator{\tr}{\text{\textnormal{tr}}}
\newcommand{\Id}{\mathds{1}} 
\newcommand{\Konvexbedingung}{\eqref{eq:konvexitaetsbedfuerproblem}}
\newcommand{\Aequizukonvexbedingung}{\eqref{eq:aequivalentzukonvexbed}}
\newcommand{\Lawp}{\eqref{eq:LAWP}}
\newcommand{\Dglugl}{\eqref{eq:dgluglfuerproblem}}
\newcommand{\Ref}[2]{\hyperref[#2]{#1 \ref*{#2}}}
\newcommand{\Refnr}[1]{\hyperref[#1]{\textnormal{(\ref*{#1})}}}
\newcommand{\Refnrohne}[1]{\eqref{#1}}
\newcommand{\Refname}[1]{\hyperref[#1]{\ref*{#1}}}
\newcommand{\my}{\mu}
\begin{document}
\title{On the convexity of the function $C \mapsto f(\det C)$ on positive definite matrices.}
\author[1]{Stephan Lehmich}
\author[1,2]{Patrizio Neff}
\author[1]{Johannes Lankeit}
\affil[1]{Lehrstuhl für Nichtlineare Analysis und Modellierung, Fakult\"at f\"ur Mathematik, Universit\"at Duisburg-Essen}
\footnotetext[2]{To whom correspondence should be addressed. e-mail: {\tt patrizio.neff@uni-due.de}}

\maketitle
\begin{abstract}
 We prove a condition on $f\in C^2(\Reell_+,\Reell)$ for the convexity of $f\circ\det$ on $\PSym(n)$, namely that $f\circ\det$ is convex on $\PSym(n)$ if and only if
 \[f''(s)+\frac{n-1}{ns}\cdot f'(s)\geqslant 0\quad\text{ and }\quad f'(s)\leqslant 0\qquad\alle s\in\Reell_+.\]
 This generalizes the observation that $C\mapsto -\ln \det C$ is convex as a function of $C$.
\end{abstract}

\section{Introduction}\label{chap:einleitung}
The question of how to choose physically reasonable strain energy functions in nonlinear elasticity has attracted much attention and is not yet completely solved. The major breakthrough came with John Ball's seminal contributions \cite{Ball77,Ball78,Ball02} introducing polyconvexity, i.e.\ convexity of the strain energy $W$ as a function of the arguments $(F,{\rm Cof}F,\det F)$, see also \cite{Schroeder_Neff01,Schroeder_Neff_Ebbing07}. Polyconvexity reconciles the physically reasonable growth condition $W(F)\to\infty$ as $\det F\to 0$ with the weak-lower-semicontinuity (quasiconvexity), which in return implies ellipticity.
A very simple example of a polyconvex function is the uni-constant compressible Neo-Hooke model
\[
 W_{NH}(F)=\my[\sprodukt{F^TF-\Id}{\Id}-2\ln\det(F)],\qquad \text{shear modulus }\my>0 .
\]
The strain energy is isotropic, frame-indifferent, polyconvex, i.e.\ convex as a function of $(F,\det F)$, stress-free in the reference configuration and $W_{NH}\to \infty$ as $\det F\to 0$. It is well known that the latter requirement excludes from the outset that $F\mapsto W_{NH}(F)$ may be a convex function of $F$, \cite{Neff_CISM07}. 
Howewer, rewriting $W_{NH}$ in terms of the Cauchy-Green deformation tensor $C=F^TF$, which gives
\[
 W_{NH}(F)=\widehat{W}_{NH}(C)=\my(\sprodukt{C-\Id}{\Id}-\ln\det C)
\]
one may readily check that $C\mapsto \widehat{W}_{NH}(C)$ is a convex function of $C$, despite its singularity in the determinant as $\det C\to 0$.\\
We surmise that convexity of the free energy w.r.t.\ $C$ (or the stretch tensor $U=\sqrt{C}$) is an additional, desirable feature of any free energy as it implies monotonicity of the stress-strain relation.\\
In this short contribution we therefore investigate, which functions $f\in C^2(\Reell_+,\Reell)$ are such that $C\mapsto f(\det C)$ is convex as function of $C\in \PSym(n)$ and generalize the 
well-known result that $C\,\mapsto -\ln \det C$ is convex on the set of positive definite symmetric matrices \cite{Strang1991,Magnus1999,Mirsky1972} 
by proving:

\begin{theorem}\textsc{(A differential inequality characterization)
}
\label{theo:dgluglfuerproblem}
  Let $f\in C^2(\Reell_+,\Reell)$. Then the function
  \[
   \abb{f\circ\det}{\PSym(n)}{\Reell}{C}{f(\det C)}
  \]
 is convex if and only if  
  \begin{equation}\label{eq:dgluglfuerproblem}
   \boxed{f''(s)+\frac{n-1}{ns}\cdot f'(s)\geqslant 0\quad\text{and}\quad f'(s)\leqslant 0\qquad\alle s\in\Reell_+\; .}
  \end{equation}  
\end{theorem}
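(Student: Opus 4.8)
The plan is to reduce convexity to a pointwise inequality on $f$ by computing the Hessian of $f\circ\det$ explicitly. Since $\PSym(n)$ is an open convex subset of the finite-dimensional space $\Sym(n)$ and $f\circ\det\in C^2$, convexity of $f\circ\det$ is equivalent to $D^2(f\circ\det)(C)[H,H]\geqslant 0$ for every $C\in\PSym(n)$ and every $H\in\Sym(n)$. Using Jacobi's formula $\frac{d}{dt}\det(C+tH)\big|_{t=0}=\det C\cdot\tr(C^{-1}H)$ together with $\frac{d}{dt}(C+tH)^{-1}\big|_{t=0}=-C^{-1}HC^{-1}$, differentiating the first derivative once more gives
\[
 D^2(f\circ\det)(C)[H,H]=(\det C)^2f''(\det C)\,(\tr(C^{-1}H))^2+\det C\cdot f'(\det C)\bigl[(\tr(C^{-1}H))^2-\tr(C^{-1}HC^{-1}H)\bigr].
\]

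Next I would substitute $S:=C^{-1/2}HC^{-1/2}$, which for each fixed $C$ is a linear bijection of $\Sym(n)$ onto itself and satisfies $\tr(C^{-1}H)=\tr S$ and $\tr(C^{-1}HC^{-1}H)=\tr(S^2)$ by cyclicity of the trace. Setting $s:=\det C>0$ and dividing by $s$, convexity becomes the requirement that
\[
 s\,f''(s)\,(\tr S)^2\;\geqslant\;f'(s)\,\bigl[\tr(S^2)-(\tr S)^2\bigr]
\]
hold for all $s\in\Reell_+$ and all $S\in\Sym(n)$. The crucial observation is that $s$ and $S$ may be prescribed independently: for any $s>0$ the choice $C=s^{1/n}\Id$ lets $S$ range over all of $\Sym(n)$, while the pair $(\tr S,\tr(S^2))$ ranges over exactly $\{(\tau,\rho):\rho\geqslant\tau^2/n\}$ — the lower bound $\tr(S^2)\geqslant\frac{1}{n}(\tr S)^2$ being the Cauchy–Schwarz inequality for the eigenvalues of $S$, with equality iff $S$ is a multiple of $\Id$, and (for $n\geqslant 2$) the value $\tr S=0<\tr(S^2)$ being attainable.

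From this the two inequalities drop out. Taking $\tr S=0$ with $\tr(S^2)>0$ forces $f'(s)\leqslant 0$. Granting $f'(s)\leqslant 0$, the right-hand side $f'(s)[\tr(S^2)-(\tr S)^2]$ is, for fixed $\tr S$, largest when $\tr(S^2)$ is smallest, i.e.\ when $\tr(S^2)=\frac{1}{n}(\tr S)^2$; inserting this and cancelling $(\tr S)^2>0$ (the case $\tr S=0$ being already treated) turns the inequality into $s\,f''(s)\geqslant-\frac{n-1}{n}f'(s)$, that is $f''(s)+\frac{n-1}{ns}f'(s)\geqslant 0$. Conversely, if both boxed inequalities hold, then from $f'(s)\leqslant 0$ and $\tr(S^2)-(\tr S)^2\geqslant-\frac{n-1}{n}(\tr S)^2$ we obtain $f'(s)[\tr(S^2)-(\tr S)^2]\leqslant-\frac{n-1}{n}f'(s)(\tr S)^2\leqslant s\,f''(s)(\tr S)^2$, so the Hessian is positive semidefinite at every $C$ and $f\circ\det$ is convex. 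I expect the only delicate points to be the Hessian computation itself and keeping track of the direction of the inequalities after multiplying by the sign-constrained factor $f'(s)$; the attainability statements for $(\tr S,\tr(S^2))$ are elementary, and the $C^2$/Hessian criterion for convexity on an open convex set is standard.
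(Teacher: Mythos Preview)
Your proof is correct, and the overall strategy---compute the Hessian explicitly, reduce to a scalar inequality involving $s=\det C$ and two trace invariants, then use a Cauchy--Schwarz-type bound---matches the paper's. The Hessian formula and both directions of the argument are right; in particular your tracking of inequality reversals under multiplication by $f'(s)\leqslant 0$ is clean.

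The difference lies in the reduction step. The paper diagonalizes $C=QDQ^T$ by an orthogonal $Q$, replaces $H$ by $Q^THQ$, and then has a separate lemma on diagonal matrices that splits $\langle D^{-1}H,HD^{-1}\rangle$ into a diagonal part $\widetilde\sigma$ plus a nonnegative remainder, before applying Cauchy--Schwarz to the diagonal entries (their $\sigma^2\leqslant n\widetilde\sigma$). Your substitution $S=C^{-1/2}HC^{-1/2}$ does all of this in one stroke: it linearly parametrizes $\Sym(n)$, turns the two quantities into $\tr S$ and $\tr(S^2)$, and the single eigenvalue inequality $\tr(S^2)\geqslant\frac{1}{n}(\tr S)^2$ simultaneously absorbs the off-diagonal contribution and supplies the sharp constant. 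This is a genuinely more economical decomposition---no diagonalization, no auxiliary lemma on $\diag_{\M}$---though it buys elegance rather than extra generality, since the underlying estimate is the same Cauchy--Schwarz.
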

\begin{proof}
 This is an immediate consequence of Lemmas \ref{lem:fmonotonfallend}, \ref{lem:unglfuerzweiteablnotwendig} and \ref{lem:hinreichend}.
\end{proof}
In the following we will reformulate the condition for convexity to obtain this result. We start with some preliminaries:

By $\M$ we denote the set of all $n\times n$-matrices, $\Sym(n)$ stands for the set of all real symmetric $n\times n$-matrices and $\PSym(n)$ for the set of all real symmetric positive definite $n\times n$-matrices.

\begin{lemma}\textsc{(Characterization of convexity)}
  Let $X$ be a normed space, $g\in C^2(K,\Reell)$ and $K\subseteq X$ open and convex. Then
  \begin{equation}
   g \text{ convex}\quad\Equi\quad D^2g(x).(z,z)\geqslant 0\quad\alle x\in K,\, z\in\spn(K)\; .
  \end{equation}
\end{lemma}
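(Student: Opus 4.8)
The plan is to reduce the statement to the classical one-dimensional characterization of convexity by restricting $g$ to line segments. For $x,y\in K$ set $\gamma_{x,y}(t)=x+t(y-x)$ and $\varphi_{x,y}=g\circ\gamma_{x,y}$. Since $K$ is open, the set $I_{x,y}=\{t\in\Reell : \gamma_{x,y}(t)\in K\}$ is open, and since $K$ is convex it is in fact an open interval containing $[0,1]$. As $\gamma_{x,y}$ is affine with $\gamma_{x,y}'(t)=y-x$ and $\gamma_{x,y}''\equiv 0$, the chain rule for Fréchet derivatives gives $\varphi_{x,y}\in C^2(I_{x,y},\Reell)$ with
\[
 \varphi_{x,y}''(t)=D^2g(\gamma_{x,y}(t)).(y-x,\,y-x)\qquad\alle t\in I_{x,y}.
\]
I will use throughout the elementary fact that a function in $C^2(J,\Reell)$ on an interval $J$ is convex on $J$ if and only if its second derivative is nonnegative on $J$, together with the observation that $g$ is convex if and only if every restriction $\varphi_{x,y}$ is convex on $[0,1]$ (this is merely a rewriting of the inequality $g((1-t)x+ty)\leqslant(1-t)g(x)+tg(y)$).

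For the direction ``$\Leftarrow$'': assume $D^2g(x).(z,z)\geqslant 0$ for all $x\in K$ and $z\in\spn(K)$. Fix $x,y\in K$. Then $y-x\in\spn(K)$, so the displayed identity yields $\varphi_{x,y}''(t)\geqslant 0$ for all $t\in I_{x,y}$, in particular on $[0,1]$; hence $\varphi_{x,y}$ is convex on $[0,1]$. Since this holds for all $x,y\in K$, the function $g$ is convex.

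For the direction ``$\Rightarrow$'': assume $g$ is convex. Then each restriction $\varphi_{x,y}$ is convex on $I_{x,y}$ (a convex function composed with an affine map is convex), so $\varphi_{x,y}''\geqslant 0$ on $I_{x,y}$. Now fix $x\in K$ and $z\in\spn(K)\subseteq X$. Since $K$ is open and $t\mapsto x+tz$ is continuous with value $x\in K$ at $t=0$, there is $\eps>0$ with $y:=x+\eps z\in K$. Evaluating the displayed identity for this pair at $t=0$ gives $0\leqslant\varphi_{x,y}''(0)=D^2g(x).(\eps z,\eps z)=\eps^2\,D^2g(x).(z,z)$, hence $D^2g(x).(z,z)\geqslant 0$, as claimed.

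The only genuinely technical point is the chain-rule computation for the second Fréchet derivative of $g\circ\gamma_{x,y}$ and the bookkeeping guaranteeing that $\varphi_{x,y}$ is $C^2$ on an open interval containing $[0,1]$; everything else is a direct translation between convexity of $g$ and convexity of its one-dimensional restrictions. (One may also note that since $K$ is open and nonempty, $\spn(K)=X$, so the condition could equivalently be stated for all $z\in X$.)
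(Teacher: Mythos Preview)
Your proof is correct and is precisely the standard line-restriction argument one finds in convex-analysis textbooks. The paper itself does not give a proof at all but simply cites \cite[p.~27]{Rockafellar70}; your write-up therefore supplies the details that the paper omits, and there is nothing substantively different to compare. (Your closing remark that $\spn(K)=X$ whenever $K$ is open and nonempty is also correct and worth keeping, since it explains why the quantifier over $\spn(K)$ is not a genuine restriction.)
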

\begin{proof}
 \cite[p.27]{Rockafellar70}
\end{proof}
In particular we obtain
\begin{satz}\label{theo:konvexitaetsbed}\textsc{(Condition for convexity)}
  For $g\in C^2\left(\PSym\left(n\right),\Reell\right)$ we have
  \begin{equation}
   g \text{ convex}\quad\Equi\quad D^2g(C).(H,H)\geqslant 0\quad\alle C\in\PSym(n),\, H\in\Sym(n)\; .
  \end{equation}
\end{satz}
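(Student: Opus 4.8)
The plan is to specialize the preceding \emph{Characterization of convexity} to the normed space $X=\Sym(n)$ and the set $K=\PSym(n)$. Thus the whole argument reduces to three elementary verifications: that $\PSym(n)$ is open in $\Sym(n)$, that it is convex, and that $\spn(\PSym(n))=\Sym(n)$.

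First I would fix an (arbitrary) norm on the finite-dimensional space $\Sym(n)$. The set $\PSym(n)$ is open in $\Sym(n)$: writing $\sigma(C)$ for the spectrum, the map $C\mapsto\min\sigma(C)$ is continuous on $\Sym(n)$ and $\PSym(n)$ is the preimage of $(0,\infty)$ under it; equivalently, one uses continuity of $C\mapsto\sprodukt{Cx}{x}$ together with compactness of the unit sphere. Convexity of $\PSym(n)$ is immediate: for $C_1,C_2\in\PSym(n)$ and $\lambda\in[0,1]$ one has $\sprodukt{(\lambda C_1+(1-\lambda)C_2)x}{x}=\lambda\sprodukt{C_1x}{x}+(1-\lambda)\sprodukt{C_2x}{x}>0$ for every $x\neq 0$, hence $\lambda C_1+(1-\lambda)C_2\in\PSym(n)$.

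Next I would check $\spn(\PSym(n))=\Sym(n)$. Since $\Id\in\PSym(n)$ and, for any $H\in\Sym(n)$, the matrix $\Id+\eps H$ has eigenvalues $1+\eps\lambda_i$ and is therefore positive definite for all sufficiently small $\eps>0$, the identity $H=\tfrac{1}{\eps}\big((\Id+\eps H)-\Id\big)$ exhibits $H$ as a linear combination of elements of $\PSym(n)$; hence $\spn(\PSym(n))=\Sym(n)$.

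With these facts the preceding Lemma, applied with $g\in C^2(\PSym(n),\Reell)$, yields $g$ convex $\Equi$ $D^2g(C).(z,z)\geqslant 0$ for all $C\in\PSym(n)$ and all $z\in\spn(\PSym(n))=\Sym(n)$, which is precisely the asserted equivalence. I do not expect any genuine obstacle; the only step worth a moment's reflection is the identity $\spn(\PSym(n))=\Sym(n)$, which is exactly what lets the test directions $z$ range over all symmetric matrices rather than only over the open cone $\PSym(n)$.
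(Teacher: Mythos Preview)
Your proof is correct and follows essentially the same approach as the paper: apply the preceding lemma with $X=\Sym(n)$, $K=\PSym(n)$ and verify openness, convexity, and $\spn(\PSym(n))=\Sym(n)$. The only cosmetic difference is in this last verification: the paper diagonalizes $H$ and writes the diagonal matrix as a linear combination of positive definite matrices, whereas your perturbation argument $H=\tfrac{1}{\eps}\big((\Id+\eps H)-\Id\big)$ is arguably more direct.
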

\begin{proof}
 Let \(K:=\PSym(n)\) and \(X:=\Sym(n)\) in the previous lemma. $\PSym(n)$ is an open convex subset of the normed space $\Sym(n)$ (with operator norm): 
 Use the characterization $A\in \PSym(n)\;\iff\;\sprodukt{Ax}{x}>0\,\alle x\in\Reell^n\ohnenull\;$ and for convexity also the Cauchy-Schwarz-inequality. Furthermore \(\spn(K)=\spn(\PSym(n))=\Sym(n)=X\):  
 The inclusion \glqq$\subseteq\,$\grqq\ is obvious. For the other inclusion write $A$ as a diagonal matrix (the corresponding transformation preserves positive definiteness and symmetry) and show that this can be written as a linear combination of positive definite symmetric matrices.
\end{proof}
\vspace*{\baselineskip}

By $\sprodukt{A}{B}=\tr(AB^T)$ we denote the trace inner product of the matrices $A$ and $B$.

\begin{satz}\label{theo:konvexitaetsbedfuerproblem}\textsc{(A condition for convexity)}

  Let $f\in C^2(\Reell_+,\Reell)$. Then the function
  \[
   \abb{g:=f\circ\det}{\PSym(n)}{\Reell}{C}{f(\det C)}
  \]
  is convex if and only if 
  \begin{equation}\label{eq:konvexitaetsbedfuerproblem}
   \boxed{\begin{gathered}   
    \allealle{C\in\PSym(n)}{H\in\Sym(n)}:\\[1ex]
    \Bigl[f''(\det C)\!\det C\!+\!f'(\det C)\Bigr]\left\langle C^{-1}, H\right\rangle^2- f'(\det C)\!\left\langle HC^{-1},                    C^{-1}H\right\rangle\geqslant 0.
   \end{gathered}}
  \end{equation}
\end{satz}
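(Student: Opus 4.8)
The plan is to apply \textbf{Theorem~\ref{theo:konvexitaetsbed}}, which reduces the convexity of $g=f\circ\det$ on $\PSym(n)$ to verifying $D^2g(C).(H,H)\geqslant 0$ for all $C\in\PSym(n)$ and $H\in\Sym(n)$, and then to compute this Hessian form explicitly by the chain rule. To that end I would fix such $C$ and $H$, set $d(t):=\det(C+tH)$ for $|t|$ small, and first determine the derivatives of $d$. Writing $M:=C^{-1}H$ and factoring $\det(C+tH)=\det C\cdot\det(\Id+tM)$, the elementary expansion $\det(\Id+tM)=1+t\,\tr M+\tfrac{t^2}{2}\bigl((\tr M)^2-\tr(M^2)\bigr)+O(t^3)$ yields
\[
 d'(0)=\det C\,\tr(C^{-1}H),\qquad d''(0)=\det C\,\bigl[(\tr(C^{-1}H))^2-\tr(C^{-1}HC^{-1}H)\bigr].
\]

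Next I would apply the chain rule to $\phi(t):=f(d(t))$: since $\phi''(0)=f''(\det C)\,d'(0)^2+f'(\det C)\,d''(0)$ and $D^2g(C).(H,H)=\phi''(0)$, substituting the two formulas above and collecting the coefficient of $\tr(C^{-1}H)^2$ gives
\[
 D^2g(C).(H,H)=\det C\,\Bigl\{\bigl[f''(\det C)\det C+f'(\det C)\bigr]\,\tr(C^{-1}H)^2-f'(\det C)\,\tr(C^{-1}HC^{-1}H)\Bigr\}.
\]
Using that $C^{-1}$ and $H$ are symmetric together with the cyclicity of the trace, one rewrites $\tr(C^{-1}H)=\langle C^{-1},H\rangle$ and $\tr(C^{-1}HC^{-1}H)=\langle HC^{-1},C^{-1}H\rangle$, so that the expression in braces is precisely the left-hand side of \eqref{eq:konvexitaetsbedfuerproblem}.

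Finally, since $\det C>0$ for every $C\in\PSym(n)$, the inequality $D^2g(C).(H,H)\geqslant 0$ holds if and only if the braced term is nonnegative; quantifying over all admissible $C$ and $H$ and invoking Theorem~\ref{theo:konvexitaetsbed} then proves the asserted equivalence. The only step requiring a little care is the second-order Taylor expansion of the determinant (equivalently, the identity $D^2\det(C).(H,H)=\det C\,[(\tr(C^{-1}H))^2-\tr(C^{-1}HC^{-1}H)]$); everything else is the chain rule and routine manipulation of the trace inner product.
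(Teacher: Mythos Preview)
Your proof is correct and follows essentially the same line as the paper's: both invoke Theorem~\ref{theo:konvexitaetsbed} and then compute $D^2g(C).(H,H)$ explicitly via the chain rule, finally dividing out the positive factor $\det C$. The only difference is cosmetic---where the paper differentiates using $D\det(C).H=\det C\,\langle C^{-1},H\rangle$ together with $D[C^{-1}].H=-C^{-1}HC^{-1}$, you extract the same second-order information from the Taylor expansion of $t\mapsto\det(\Id+tC^{-1}H)$.
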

\begin{proof}
 Because $f\in C^2, \det\in C^\infty$, also $g\in C^2$. It remains to be shown that
 \begin{multline*}
  D^2g(C).(H,H)=\det C\cdot\Bigl\{\bigl[f''(\det C)\cdot\det C + f'(\det C)\bigr]\cdot\sprodukt{C^{-1}}{H}^2\\
  - f'(\det C)\cdot\sprodukt{HC^{-1}}{C^{-1}H}\Bigr\}
 \end{multline*}
 for $C\in\PSym(n)$ and $H\in\Sym(n)$, then the claim follows by Theorem \ref{theo:konvexitaetsbed}. 
 Because $\det$ is infinitely often differentiable on $\mathbb{M}^{n\times n}$ and (cf. \cite{Neff_CISM07})
   \(D\det(A).H=\sprodukt{\Adj A^T}{H}\), where $\Adj A$ denotes the adjugate matrix of $A$. For invertible $C$ and symmetric $H$ we have \(D\det(C).H=\det{C}\sprodukt{C^{-1}}{H}\), 
 and hence obtain by the chain rule
 \begin{align*}
  Dg(C).H = Df(\det C)D\det(C).H 
          = f'(\det C)\cdot\det C\cdot\sprodukt{C^{-1}}{H},
\end{align*}
and therefore, by chain rule and the fact that $D[C^{-1}].H=-C^{-1}HC^{-1}$,
 \begin{alignat}{2}
  D^2g(C).(H,H)&=&& f''(\det C)\cdot\left(\det C\right)^2\cdot\sprodukt{C^{-1}}{H}^2 
               \,+\, f'(\det C)\cdot\det C\cdot\sprodukt{C^{-1}}{H}^2 \notag\\
               &&\,+\,& f'(\det C)\cdot\det C\cdot\sprodukt{-C^{-1}HC^{-1}}{H} \\               
               &=&& f''(\det C)\cdot\det C^2\cdot\sprodukt{C^{-1}}{H}^2
               \,+\, f'(\det C)\cdot\det C\cdot\sprodukt{C^{-1}}{H}^2 \notag\\
               &&\,-\,& f'(\det C)\cdot\det C\cdot\sprodukt{HC^{-1}}{C^{-1}H} \; . \notag\tag*{\qedhere}
 \end{alignat}
\end{proof}

\begin{lemma}\label{lem:fmonotonfallend}
 The inequality \(f'(s)\leqslant 0\quad\alle s\in\Reell_+\,\)
  is necessary for \Konvexbedingung.
\end{lemma}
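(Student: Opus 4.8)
The plan is to test the inequality \Konvexbedingung\ against a cleverly chosen pair $(C,H)$ for which the first summand disappears. Fix $s\in\Reell_+$; I want to produce $C\in\PSym(n)$ with $\det C=s$ and $H\in\Sym(n)\ohnenull$ such that $\sprodukt{C^{-1}}{H}=0$. For such a pair the bracketed $f''$-term in \Konvexbedingung\ drops out and the inequality collapses to
\[
 -f'(s)\,\sprodukt{HC^{-1}}{C^{-1}H}\;\geqslant\;0 .
\]
Hence, as soon as one knows that $\sprodukt{HC^{-1}}{C^{-1}H}>0$ for the chosen $H$, the desired sign $f'(s)\leqslant 0$ follows; letting $s$ range over all of $\Reell_+$ then finishes the proof.

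For the concrete choice I would take $C:=s^{1/n}\Id\in\PSym(n)$, which has $\det C=s$ and $C^{-1}=s^{-1/n}\Id$, together with a fixed nonzero traceless symmetric matrix, e.g.\ $H:=\diag(1,-1,0,\dots,0)$ (this is where $n\geqslant 2$ enters). Then $\sprodukt{C^{-1}}{H}=s^{-1/n}\tr H=0$, so the $f''$-term vanishes, while $\sprodukt{HC^{-1}}{C^{-1}H}=s^{-2/n}\sprodukt{H}{H}=2\,s^{-2/n}>0$.

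The only point that deserves a sentence rather than a bare substitution is the strict positivity of $\sprodukt{HC^{-1}}{C^{-1}H}$: using the symmetry of $C^{-1}$ and $H$ one has $\sprodukt{HC^{-1}}{C^{-1}H}=\tr(HC^{-1}C^{-1}H)=\|C^{-1/2}HC^{-1/2}\|^2$, which vanishes precisely when $H=0$ because $C^{-1/2}$ is invertible. Beyond this observation there is no genuine obstacle: the lemma reduces to exhibiting a single admissible test pair that annihilates the $f''$-contribution, and the diagonal choice above does exactly that.
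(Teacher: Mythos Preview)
Your proof is correct and follows essentially the same route as the paper's: both kill the $f''$-term by choosing $H=\diag(1,-1,0,\dots,0)$, the paper pairing it with $C=\diag(1,\dots,1,s)$ rather than your $C=s^{1/n}\Id$. One minor slip in your supplementary remark: for general symmetric $C,H$ the intermediate expression should be $\tr(HC^{-1}HC^{-1})$, not $\tr(HC^{-1}C^{-1}H)$; the final identity $\sprodukt{HC^{-1}}{C^{-1}H}=\|C^{-1/2}HC^{-1/2}\|^2$ is nevertheless correct, and your main argument does not depend on this remark anyway.
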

\begin{proof}
 Assume $s\in\Reell_+$ satisfying $f'(s)>0\,$. Let
  \(C=\diag(1,...,1,s)\in\PSym(n)\) and \(H=\diag(1,-1,0,0,...)\in\Sym(3)\). Then $\,\det C=s\,$, $\,\sprodukt{C^{-1}}{H}=1-1=0\,$ and
  \(\sprodukt{HC^{-1}}{C^{-1}H}=\sprodukt{\diag(1,-1,0,...,0)}{\diag(1,-1,0,...,0)}=2\). Together with \Refnrohne{eq:konvexitaetsbedfuerproblem} we obtain
  \(-2 f'(s)\geqslant 0\), a contradiction to $f'(s)>0$.
\end{proof}

\begin{lemma}\label{lem:aequivalentzukonvexbed}
  \Konvexbedingung\ holds if and only if
  \begin{equation}\label{eq:aequivalentzukonvexbed}
   \boxed{\begin{gathered}
    \allealle{H\in\Sym(n)}{D^{-1}=\diag(d_1,...,d_n)}\; ,\\
    \text{where $\; d_1,...,d_n\in\Reell_+\quad\text{and}\quad s^{-1}:=\det D^{-1}=d_1\cdot\ldots\cdot d_n\in\Reell_+:$}\\[1ex]
    \left(f''(s) + \frac{f'\left(s\right)}{s}\right)\sprodukt{D^{-1}}{H}^2 - \frac{f'(s)}{s}\sprodukt{D^{-1}H}{HD^{-1}}\geqslant 0\; .  
   \end{gathered}}
  \end{equation}  
\end{lemma}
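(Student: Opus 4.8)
The plan is to recognize that \Aequizukonvexbedingung\ is exactly \Konvexbedingung\ restricted to \emph{diagonal} arguments $C$, and that this restriction entails the full statement because the expression appearing in \Konvexbedingung\ is invariant under simultaneous orthogonal conjugation $C\mapsto Q^TCQ$, $H\mapsto Q^THQ$. Concretely, note that the left-hand side of \Konvexbedingung\ depends on $C$ only through $C^{-1}$ and $\det C$, and that $C\mapsto C^{-1}$ maps $\PSym(n)$ onto itself.

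For the implication ``\Konvexbedingung\ $\Rightarrow$ \Aequizukonvexbedingung'': given $H\in\Sym(n)$ and a positive diagonal matrix $D^{-1}=\diag(d_1,\ldots,d_n)$, apply \Konvexbedingung\ to the matrix $C:=\diag(d_1^{-1},\ldots,d_n^{-1})\in\PSym(n)$. Then $C^{-1}=D^{-1}$ and $\det C=(d_1\cdots d_n)^{-1}=s$, so \Konvexbedingung\ reads $[f''(s)\,s+f'(s)]\sprodukt{D^{-1}}{H}^2-f'(s)\sprodukt{HD^{-1}}{D^{-1}H}\geqslant 0$; dividing by $s>0$ and using $\sprodukt{HD^{-1}}{D^{-1}H}=\tr(HD^{-1}HD^{-1})=\sprodukt{D^{-1}H}{HD^{-1}}$ yields precisely \Aequizukonvexbedingung.

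For the converse, fix arbitrary $C\in\PSym(n)$ and $H\in\Sym(n)$. By the spectral theorem I would write $C^{-1}=Q^TD^{-1}Q$ with $Q$ orthogonal and $D^{-1}=\diag(d_1,\ldots,d_n)$, $d_i>0$ the eigenvalues of $C^{-1}$, and set $\widetilde H:=QHQ^T\in\Sym(n)$ and $s:=\det C$, so that $\det D^{-1}=s^{-1}=d_1\cdots d_n$. Using $Q^TQ=\Id$ and the cyclicity of the trace one checks $\sprodukt{C^{-1}}{H}=\tr(C^{-1}H)=\tr(D^{-1}\widetilde H)=\sprodukt{D^{-1}}{\widetilde H}$ and $\sprodukt{HC^{-1}}{C^{-1}H}=\tr\bigl((HC^{-1})^2\bigr)=\tr\bigl((\widetilde HD^{-1})^2\bigr)=\sprodukt{\widetilde HD^{-1}}{D^{-1}\widetilde H}$. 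Substituting these identities into \Konvexbedingung\ and multiplying \Aequizukonvexbedingung\ (applied to $\widetilde H$ and this $D^{-1}$) by $s>0$ shows the two inequalities coincide, so \Aequizukonvexbedingung\ implies \Konvexbedingung.

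The only genuine work is the bookkeeping of the three trace identities above together with keeping track of the reciprocal relationship $\det D^{-1}=s^{-1}$ versus $s=\det C$; this is routine. The conceptual point, which is the part worth stating cleanly, is simply that nothing in \Konvexbedingung\ ``sees'' more of $C$ than its inverse and its determinant, and that both of these transform equivariantly under orthogonal conjugation, so one may freely assume $C^{-1}$ (equivalently $C$) to be diagonal.
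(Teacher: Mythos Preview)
Your proof is correct and follows essentially the same approach as the paper: diagonalize $C$ (equivalently $C^{-1}$) via the spectral theorem, conjugate $H$ by the same orthogonal matrix, verify the two trace identities, and scale by $s>0$. The only differences are cosmetic---you separate the two implications explicitly and phrase the trace computations via cyclicity rather than via the bilinearity of the scalar product---but the mathematical content is identical.
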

\begin{proof}
 Consider an arbitrary $C\in\PSym(n)$ in \Refnrohne{eq:konvexitaetsbedfuerproblem}. Then there is an orthogonal matrix $Q$, such that \(
  C=QDQ^T\quad\equi\quad C^{-1}=QD^{-1}Q^T
 \),
 where  $D=\diag(\lambda_1,...,\lambda_n)$ and positive $\lambda_i$. 
By the properties of the scalar product of matrices we have
\[
\sprodukt{C^{-1}}{H}=\sprodukt{QD^{-1}Q^T}{H}=\sprodukt{QD^{-1}}{HQ}=\langle D^{-1},Q^THQ\rangle.
\]
For $H\in \Sym(n)$ let $\widetilde{H}:=Q^THQ\) and note that $H$ varies over the whole of $\Sym(n)$ if and only if $\widetilde{H}$ does. 
Analogously,
\[  \sprodukt{HC^{-1}}{C^{-1}H}=\sprodukt{HQD^{-1}Q^T}{QD^{-1}Q^TH}\\
                              =\sprodukt{\widetilde{H}D^{-1}}{D^{-1}\widetilde{H}}\\
                              =\sprodukt{D^{-1}\widetilde{H}}{\widetilde{H}D^{-1}}\; . 
\]
Denote $d_i:=\lambda_i^{-1}$ and $s:=\det C=\det D=\prod_{i=1}^n\lambda_i$ and divide \eqref{eq:konvexitaetsbedfuerproblem} by $s>0$ to obtain \eqref{eq:aequivalentzukonvexbed}.
\end{proof}

\begin{lemma}
 \label{lem:unglfuerzweiteablnotwendig}
 Let $f\in C^2(\Reell_+,\Reell)$ and $f\circ\det$ be convex on $\PSym(n)$. Then $f''(s)\geq-\frac{n-1}{n}\frac{f'(s)}{s}\;\forall s\in\Reell_+$.
\end{lemma}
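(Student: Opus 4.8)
The plan is to feed the reformulated convexity criterion a single, well-chosen test pair $(C,H)$ and read off the inequality directly. Fix an arbitrary $s\in\Reell_+$ and take the dilation $C:=s^{1/n}\,\Id\in\PSym(n)$, which has $\det C=s$ by construction, together with $H:=\Id\in\Sym(n)$. Since $f\circ\det$ is assumed convex on $\PSym(n)$, the pointwise inequality \eqref{eq:konvexitaetsbedfuerproblem} of Theorem~\ref{theo:konvexitaetsbedfuerproblem} must hold at $(C,H)$; substituting this particular pair and simplifying should produce exactly $f''(s)+\tfrac{n-1}{ns}f'(s)\geqslant 0$, which is the asserted estimate.

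Let me first explain why the isotropic choice is the right one. Writing $a:=\sprodukt{C^{-1}}{H}$ and $b:=\sprodukt{HC^{-1}}{C^{-1}H}\geqslant 0$, condition \eqref{eq:konvexitaetsbedfuerproblem} reads $[f''(\det C)\det C+f'(\det C)]\,a^2-f'(\det C)\,b\geqslant 0$, and since $f'\leqslant 0$ by Lemma~\ref{lem:fmonotonfallend}, the term $-f'(\det C)\,b$ is nonnegative. Hence the most restrictive instances are those for which the ratio $b/a^2$ is as small as possible. Diagonalizing $C$ and passing to $\widetilde H:=Q^THQ$ as in Lemma~\ref{lem:aequivalentzukonvexbed}, one may assume $C^{-1}=\diag(d_1,\dots,d_n)$ with $d_i>0$; then off-diagonal entries of $\widetilde H$ only enlarge $b$ without changing $a$, so the extremal $\widetilde H$ is diagonal, and with $u_i:=d_i\widetilde H_{ii}$ one minimizes $\sum u_i^2=b$ under $\sum u_i=a$. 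Cauchy--Schwarz gives $a^2\leqslant n\,b$, with equality exactly when all $u_i$ coincide, so the least possible value of $b/a^2$ is $\tfrac1n$, independently of the $d_i$; and $C^{-1}=s^{-1/n}\Id$, $H=\Id$ realizes it.

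Carrying it out is then a short computation. For $C=s^{1/n}\Id$ and $H=\Id$ one has $\det C=s$, $C^{-1}=s^{-1/n}\Id$, $\sprodukt{C^{-1}}{H}=n\,s^{-1/n}$ and $\sprodukt{HC^{-1}}{C^{-1}H}=n\,s^{-2/n}$, so \eqref{eq:konvexitaetsbedfuerproblem} becomes
\[
\bigl[f''(s)\,s+f'(s)\bigr]\,n^{2}s^{-2/n}-f'(s)\,n\,s^{-2/n}\;\geqslant\;0 .
\]
Dividing by the positive factor $n\,s^{-2/n}$ leaves $n\,s\,f''(s)+(n-1)f'(s)\geqslant 0$, and dividing once more by $ns>0$ yields $f''(s)\geqslant-\tfrac{n-1}{n}\tfrac{f'(s)}{s}$. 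As $s\in\Reell_+$ was arbitrary, this is the claim. (Equivalently one can insert $D^{-1}=s^{-1/n}\Id$ and $H=\Id$ into \eqref{eq:aequivalentzukonvexbed}.)

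I do not expect a serious obstacle: the whole content of the necessary direction is the recognition that the isotropic direction $H=\Id$ at an isotropic $C$ is extremal and pins the constant to $1/n$, after which the derivation is a one-line substitution. The only points requiring a little care are to make sure the chosen $C$ indeed has determinant $s$ and lies in $\PSym(n)$, and to keep track of the powers $s^{\pm 1/n}$ when simplifying; unlike the sufficiency direction (Lemma~\ref{lem:hinreichend}), no estimate or limiting argument is involved.
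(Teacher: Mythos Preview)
Your proof is correct and follows essentially the same approach as the paper: plug in the isotropic pair $C=s^{1/n}\Id$, $H=\Id$ (the paper equivalently takes $D^{-1}=s^{-1/n}\Id$ and $H=kD^{-1}$ in \eqref{eq:aequivalentzukonvexbed}) and simplify. Your additional paragraph explaining via Cauchy--Schwarz why this choice is extremal is not needed for the necessity argument itself, but it is correct and nicely anticipates the content of Lemma~\ref{lem:diagabsch}, which the paper uses only later for sufficiency.
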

\begin{proof}
 According to Lemma \ref{lem:aequivalentzukonvexbed}, \eqref{eq:aequivalentzukonvexbed} holds for all $H\in \Sym(n)$ and $D^{-1}=\diag(d_1,...,d_n)$. 
 Let $s\in\Reell_+$, $k\in\Reell\setminus\{0\}$ and $H=k\cdot D^{-1}$, as well as $D^{-1}=\diag(s^{-\frac{1}{n}},...,s^{-\frac{1}{n}})$.
 \begin{alignat*}{2}
  0&\leq&k^2& \left(\left(f''(s) + \frac{f'\left(s\right)}{s}\right)\sprodukt{D^{-1}}{D^{-1}}^2 -                                                          \frac{f'(s)}{s}\sprodukt{(D^{-1})^2}{(D^{-1})^2}\right)\\
  &=&k^2& \left(\left(f''(s) + \frac{f'\left(s\right)}{s}\right)\tr^2(D^{-1})^2 - \frac{f'(s)}{s}\cdot\tr (D^{-1})^4\right)\\
  &=&k^2& \left(\left(f''(s) + \frac{f'\left(s\right)}{s}\right)\left(ns^{-2/n}\right)^2 - \frac{f'(s)}{s}\cdot ns^{-4/n}\right)
  =\;nk^2s^{-4/n} \left(nf''(s) + (n-1)\frac{f'\left(s\right)}{s}\right).\tag*{\qedhere}
 \end{alignat*}
\end{proof}

For any matrix $A$ let $\diag A$ be the matrix obtained from $A$ by setting all non-diagonal entries zero. Let $\diag_{\M}$ be the set of all $n\times n-\,$ diagonal matrices.

\begin{lemma}\label{lem:diagabsch}
  For all $P\in\diag_{\M}$ with non-negative entries only and all $A\in\M$ the following holds:
  \begin{gather}
   \sprodukt{P}{A}=\sprodukt{P}{\diag A}=:\sigma(P,A)\; ,\label{untitled5}\\
   \sprodukt{PA}{AP}\geqslant\sprodukt{P\diag A}{\diag A\;P}=:\widetilde{\sigma}(P,A),\label{untitled6}\\
   \sigma^2(P,A)\leqslant n\cdot\widetilde{\sigma}(P,A)\; .\label{untitled7}
  \end{gather} 
\end{lemma}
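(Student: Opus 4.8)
The plan is to reduce all three claims to elementary entrywise computations with the trace inner product $\sprodukt{A}{B}=\tr(AB^T)=\sum_{i,j}A_{ij}B_{ij}$, using throughout that $P=\diag(p_1,\dots,p_n)$ with $p_i\geqslant 0$.

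For \eqref{untitled5} I would simply observe that $P_{ij}=0$ for $i\neq j$, so $\sprodukt{P}{A}=\sum_{i,j}P_{ij}A_{ij}=\sum_{i=1}^n p_iA_{ii}$, and the same computation applied to $\diag A$ in place of $A$ gives the identical number; hence $\sigma(P,A)=\sum_{i=1}^n p_iA_{ii}$. (Non-negativity of the $p_i$ is not used here.)

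For \eqref{untitled6} I would compute $(PA)_{ij}=p_iA_{ij}$ and $(AP)_{ij}=A_{ij}p_j$, so that
\[
 \sprodukt{PA}{AP}=\sum_{i,j}p_ip_jA_{ij}^2=\sum_{i=1}^n p_i^2A_{ii}^2+\sum_{i\neq j}p_ip_jA_{ij}^2 .
\]
On the other hand $P\diag A$ and $(\diag A)P$ are both the diagonal matrix with $i$-th entry $p_iA_{ii}$, whence $\widetilde{\sigma}(P,A)=\sum_{i=1}^n p_i^2A_{ii}^2$, which is exactly the first of the two sums above. The remaining sum $\sum_{i\neq j}p_ip_jA_{ij}^2$ is non-negative because each $p_i\geqslant 0$, and this yields the inequality. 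This is the only place where the hypothesis on the sign of the entries of $P$ is genuinely needed.

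For \eqref{untitled7} I would write $\sigma(P,A)=\sum_{i=1}^n 1\cdot(p_iA_{ii})$ and apply the Cauchy--Schwarz inequality in $\Reell^n$, obtaining $\sigma^2(P,A)\leqslant\bigl(\sum_{i=1}^n 1\bigr)\bigl(\sum_{i=1}^n p_i^2A_{ii}^2\bigr)=n\,\widetilde{\sigma}(P,A)$. I do not anticipate a real obstacle here: the argument is bookkeeping with the trace inner product together with one instance of Cauchy--Schwarz. The only points demanding a little care are keeping the order of the matrix products straight in the definition of $\widetilde{\sigma}$, so that it really equals $\sum_i p_i^2A_{ii}^2$, and observing that the off-diagonal contribution to $\sprodukt{PA}{AP}$ has a definite sign — which is precisely what the non-negativity of $P$ buys us.
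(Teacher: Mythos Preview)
Your argument is correct and coincides with the paper's own proof: the same entrywise computation for \eqref{untitled5} and \eqref{untitled6}, and the same Cauchy--Schwarz step for \eqref{untitled7} (the paper merely phrases it as $\sprodukt{P\diag A}{\Id}^2\leqslant\|P\diag A\|^2\|\Id\|^2$, which is your inequality in $\Reell^n$ written in matrix inner-product notation).
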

\begin{proof}
 Let $P=\diag(p_1,...,p_n),\;A=(a_{ij})_{i,j}$ and calculate $\sprodukt{P}{A}=\sum_{i=1}^n p_ia_{ii}=\sprodukt{P}{\diag A}$. Hence \Refnrohne{untitled5} holds. 
 Direct calculation of $PA$ and $PA^T$ yields
 \[\sprodukt{PA}{AP}=\tr\left( PAPA^T\right)=\sum_{i=1}^n p_i^2a_{ii}^2+\sum_{i=1}^n\sum_{k\neq i}p_ip_ka_{ik}^2\geqslant\sprodukt{P\diag A}{\diag A\, P},\]
 i.e. \Refnrohne{untitled6}. 
 For all $P\in\diag_{\M}$ and $A\in\M$, we have $\sigma^2(P,A)\leqslant n\cdot\widetilde{\sigma}(P,A)$.
  To see this, note that $P$ and $\diag A$ commute, i.e. $\widetilde{\sigma}(P,A)=\sprodukt{P\diag A}{P\diag A}=\| P\diag A\|^2$ 
  holds. By Cauchy-Schwarz-inequality this immediately implies
  \begin{align*}
   \sigma^2(P,A)=\sprodukt{P}{\diag A}^2=\sprodukt{P\diag A}{\Id}^2 \leqslant \| P\diag A\|^2\cdot\|\Id\|^2=n\cdot\widetilde{\sigma}(P,A)    \; .\tag*{\qedhere}
  \end{align*}
\end{proof}


\begin{lemma}
 \label{lem:hinreichend}
 \Refnrohne{eq:dgluglfuerproblem} is sufficient for the convexity of $f\circ\det$.
\end{lemma}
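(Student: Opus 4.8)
The plan is to reduce, via Lemma \ref{lem:aequivalentzukonvexbed}, to verifying the equivalent inequality \eqref{eq:aequivalentzukonvexbed} and then to feed it the estimates collected in Lemma \ref{lem:diagabsch}. Fix $H\in\Sym(n)$ and $D^{-1}=\diag(d_1,\dots,d_n)$ with $d_i\in\Reell_+$, and set $s:=(d_1\cdots d_n)^{-1}\in\Reell_+$. Abbreviate $a:=f''(s)+\frac{f'(s)}{s}$, $b:=\frac{f'(s)}{s}$, $X:=\sprodukt{D^{-1}}{H}^2$ and $Y:=\sprodukt{D^{-1}H}{HD^{-1}}$, so that \eqref{eq:aequivalentzukonvexbed} reads $aX-bY\geqslant 0$. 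Since $D^{-1}$ is diagonal with positive entries, Lemma \ref{lem:diagabsch} (applied with $P=D^{-1}$, $A=H$) gives, via \eqref{untitled6} and \eqref{untitled7}, the chain $0\leqslant X\leqslant n\,\widetilde\sigma(D^{-1},H)\leqslant nY$; in particular $X\geqslant 0$ and $Y\geqslant 0$. Finally, the hypotheses \eqref{eq:dgluglfuerproblem} translate precisely into $b=\frac{f'(s)}{s}\leqslant 0$ and $na-b=nf''(s)+(n-1)\frac{f'(s)}{s}=n\bigl(f''(s)+\tfrac{n-1}{ns}f'(s)\bigr)\geqslant 0$.

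With these four facts ($0\leqslant X\leqslant nY$, $Y\geqslant 0$, $b\leqslant 0$, $na-b\geqslant 0$) the inequality $aX-bY\geqslant 0$ follows by a short case distinction on the sign of the leading coefficient $a$. If $a\geqslant 0$, then $aX\geqslant 0$ and $-bY\geqslant 0$, so the sum is non-negative. If $a<0$, then multiplying $X\leqslant nY$ by $a$ reverses the inequality to $aX\geqslant anY$, whence $aX-bY\geqslant (na-b)Y\geqslant 0$. In either case \eqref{eq:aequivalentzukonvexbed} holds for the chosen $H$ and $D^{-1}$; since these were arbitrary, Lemma \ref{lem:aequivalentzukonvexbed} yields the convexity of $f\circ\det$, which is the assertion.

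I expect the only genuinely substantive input to be the estimate $\sprodukt{D^{-1}}{H}^2\leqslant n\,\sprodukt{D^{-1}H}{HD^{-1}}$ coming from \eqref{untitled6}--\eqref{untitled7}, as it is here that the dimension-dependent factor $n$ — hence the exact constant $\tfrac{n-1}{n}$ in \eqref{eq:dgluglfuerproblem} — is forced. The minor obstacle is bookkeeping of signs in the case $a<0$: there one must use $X\leqslant nY$ in the "unfavourable" direction, and it is precisely the stronger differential inequality $f''+\tfrac{n-1}{ns}f'\geqslant 0$ (not merely $f''+\tfrac{f'}{s}\geqslant 0$) that makes the residual coefficient $na-b$ non-negative and closes the estimate. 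No additional regularity or structural assumptions on $f$ are needed beyond $f\in C^2(\Reell_+,\Reell)$.
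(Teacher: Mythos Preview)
Your proof is correct and follows essentially the same route as the paper: reduce to \eqref{eq:aequivalentzukonvexbed} via Lemma \ref{lem:aequivalentzukonvexbed}, then feed in the estimates of Lemma \ref{lem:diagabsch} together with the sign information from \eqref{eq:dgluglfuerproblem}. The only cosmetic difference is that the paper, after replacing $Y$ by $\widetilde\sigma$ (using $-b\geqslant 0$), factors out $\sigma^2$ and applies $1-\widetilde\sigma/\sigma^2\leqslant\tfrac{n-1}{n}$ directly, treating the case $\sigma=0$ separately; your case split on the sign of $a$ achieves the same end and has the mild advantage of not needing to single out $\sigma=0$.
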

\begin{proof}We will show \Aequizukonvexbedingung. To this end, let
 \(
  H\in\Sym(n)\,\text{ and }\; D^{-1}=\diag(d_1,...,d_n)
 \),
 where $d_1,...,d_n\in\Reell_+$ and $s:=(d_1\cdot\ldots\cdot d_n)^{-1}=\det D$ arbitrary. Then $P:=D^{-1}$ and $A:=H$ satisfy all assumptions of the previous lemma.  Using the notation from lemma \ref{lem:diagabsch}, we can, \obda, assume $\sigma(D^{-1},H)\neq 0$, because otherwise   \Refnrohne{eq:aequivalentzukonvexbed} becomes trivial by the assumption $f'\leqslant 0$. We denote $\sigma=\sigma(D^{-1},H)$ and $\widetilde\sigma=\widetilde\sigma(D^{-1},H)\leqslant \sprodukt{D^{-1}H}{HD^{-1}}$ by \eqref{untitled6}. Using $\frac{f'(s)}{s}\leq 0$ and $f''(s)+\frac{n-1}{n}\frac{f'(s)}{s}\geq 0$ by \eqref{eq:dgluglfuerproblem} and $1-\frac{\widetilde\sigma}{\sigma^2}\leq\frac{n-1}{n}$  
 we obtain \eqref{eq:aequivalentzukonvexbed} from
  \begin{align*}
  &\left(f''(s) + \frac{f'\left(s\right)}{s}\right){\sprodukt{D^{-1}}{H}}^2 -\,\frac{f'(s)}{s}\; \sprodukt{D^{-1}H}{HD^{-1}}
  \geqslant \left(f''(s) + \frac{f'\left(s\right)}{s}\right)\cdot\sigma^2- \frac{f'(s)}{s}\cdot \widetilde{\sigma}\\
  &=\sigma^2\cdot\Biggl[\; f''\left(s\right) +\frac{f'\left(s\right)}{s}\cdot\,\left(1 - \frac{\widetilde{\sigma}}{\;\sigma^2\;}\right)\;           \Biggr]
  \geqslant \sigma^2\cdot\left(f''\left(s\right) + \frac{n-1}{n}\cdot\frac{f'\left(s\right)}{s}                     \right)\;\geqslant\, 0\; .\tag*{\qedhere}
 \end{align*}
\end{proof}

\section{Solutions to the differential inequalities }\label{sec:lsgnderdglugl}

In this section we are interested in the possible shape of the functions that satisfy \Dglugl. To make calculations and figures more concrete, we restrict ourselves to the case $n=3$.

\begin{lemma}\textsc{(Linear ODE)}\label{theo:linearedgl}
  The linear initial value problem
  \begin{align*}  
   Ly:=y'+g(x)y=0, \; y(\xi)=\eta\tag{LIVP}\label{eq:LAWP}
  \end{align*}
   on $J=\Reell_+$ and where $g(x)=\frac{2}{3x}$ has one and only one solution.
\end{lemma}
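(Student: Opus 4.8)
The plan is to handle existence and uniqueness separately, exploiting the fact that the coefficient $g(x)=\frac{2}{3x}$ is continuous (though unbounded near $0$) on the \emph{open} interval $J=\Reell_+$, so that the classical integrating-factor method applies on all of $J$ in one step. Throughout, fix the data $\xi\in\Reell_+$ and $\eta\in\Reell$ from \Lawp.

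For existence I would introduce the integrating factor
\[
 \my(x):=\exp\left(\int_\xi^x g(t)\,dt\right)=\exp\left(\tfrac{2}{3}\ln\tfrac{x}{\xi}\right)=\left(\tfrac{x}{\xi}\right)^{2/3},
\]
which is well defined, strictly positive and $C^1$ on $J$ because the integrand is continuous there. Multiplying $Ly=0$ by $\my$ turns the left-hand side into $(\my y)'$, so any solution must satisfy $(\my y)'\equiv 0$, i.e. $\my y$ is constant; imposing $y(\xi)=\eta$ together with $\my(\xi)=1$ forces the candidate $y(x)=\eta\,\my(x)^{-1}=\eta\left(\tfrac{\xi}{x}\right)^{2/3}$. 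A direct differentiation then confirms that this $y\in C^1(J,\Reell)$ indeed solves \Lawp, which establishes existence.

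For uniqueness, suppose $y_1,y_2$ both solve \Lawp\ and set $z:=y_1-y_2$, so that $z'+g(x)z=0$ on $J$ and $z(\xi)=0$. The same integrating factor gives $(\my z)'=\my\bigl(z'+gz\bigr)\equiv 0$, hence $\my z$ is constant; evaluating at $x=\xi$ yields $\my(\xi)z(\xi)=0$, so $\my z\equiv 0$, and since $\my>0$ on $J$ this means $z\equiv 0$, i.e. $y_1=y_2$.

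The only point that requires a little care — and the nearest thing to an obstacle — is that $g$ is singular at the left endpoint $0$ of $J$, so one cannot simply invoke a local Picard--Lindel\"of statement on a neighbourhood of $0$ and patch solutions together; instead one works entirely on the open interval $\Reell_+$, where $g$ is continuous, and the explicit integrating factor above delivers the (unique) solution globally on $J$ at once. Equivalently, one may just cite the general existence-and-uniqueness theorem for linear first-order scalar ODEs with continuous coefficients on an interval, of which this is a special case.
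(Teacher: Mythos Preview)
Your argument is correct. The paper itself does not give a proof of this lemma; it is stated as a standard fact about linear first-order ODEs with continuous coefficients on an interval, and the explicit solution formula you derive via the integrating factor appears only in the proof of the subsequent Lemma~\ref{lem:grenzfallderdglugl}, obtained there by what the paper calls ``separation of variables'' --- which for this equation is the same computation. So your proof is more detailed than anything the paper provides, but entirely in line with its treatment.
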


To find solutions to $Lf'\geq 0$ under the additional constraint $y=f'\leq 0$ (which is equivalent to $\eta\leq 0$ because $f'\equiv 0$ is a solution)
we consider the ``limiting case'': 

\begin{lemma}\textsc{(Limiting case for \Dglugl)}\label{lem:grenzfallderdglugl}
 The solutions to
  \begin{align*}
   \flimit''(s)+\frac{2}{3s}\cdot \flimit'(s)=0\quad\text{and}\quad \flimit'(s)\leqslant 0\qquad\alle s\in\Reell_+
  \end{align*}
 are given by $\flimit\colon \Reell_+\to\Reell,\; s\mapsto c\cdot s^{1/3} + d$, where $c\leqslant 0\, ,\;d\in\Reell.$
 \end{lemma}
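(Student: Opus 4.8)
The plan is to reduce the second-order equation to a first-order linear ODE by the substitution $y := \flimit'$, so that $\flimit''(s) + \frac{2}{3s}\flimit'(s) = 0$ becomes exactly the homogeneous equation $y'(s) + \frac{2}{3s}\,y(s) = 0$ on $\Reell_+$ treated in \Ref{Lemma}{theo:linearedgl}.

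First I would determine all solutions of this first-order equation. The functions $s \mapsto A\,s^{-2/3}$ with $A \in \Reell$ solve it, and I claim they are the only solutions: given an arbitrary solution $y$ and a point $\xi \in \Reell_+$, the function $s \mapsto y(\xi)\,\xi^{2/3}\,s^{-2/3}$ is again a solution and takes the value $y(\xi)$ at $\xi$, so it must coincide with $y$ by the uniqueness statement of \Ref{Lemma}{theo:linearedgl}. (Equivalently: if $y$ vanishes anywhere it vanishes identically by that uniqueness, and otherwise one separates variables to get $\ln|y| = -\tfrac{2}{3}\ln s + \text{const}$.) Hence $\flimit'(s) = A\,s^{-2/3}$ for some fixed $A \in \Reell$.

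Next I would integrate once more over the interval $\Reell_+$: from $\flimit'(s) = A\,s^{-2/3}$ one obtains $\flimit(s) = 3A\,s^{1/3} + d$ for some integration constant $d \in \Reell$, and setting $c := 3A$ gives the asserted form $\flimit(s) = c\,s^{1/3} + d$. Conversely every such function is $C^2$ on $\Reell_+$ and satisfies the differential equation, so only the admissible constants remain to be pinned down.

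Finally I would read off the sign constraint: for $\flimit(s) = c\,s^{1/3} + d$ we have $\flimit'(s) = \tfrac{c}{3}\,s^{-2/3}$, and since $s^{-2/3} > 0$ for every $s \in \Reell_+$, the requirement $\flimit'(s) \leqslant 0$ for all $s \in \Reell_+$ is equivalent to $c \leqslant 0$, with $d$ unconstrained; this gives both directions of the characterization. I do not anticipate a genuine obstacle; the only step needing a little care is the appeal to uniqueness when identifying \emph{all} solutions of the first-order equation, since a priori a solution could vanish on part of $\Reell_+$ and the naive separation-of-variables computation would not by itself exclude this.
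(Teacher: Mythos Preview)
Your proposal is correct and follows essentially the same route as the paper: substitute $y=\flimit'$, solve the first-order linear ODE via the formula $y(s)=\eta\,\xi^{2/3}s^{-2/3}$ (the paper writes this out by separation of variables and you invoke the uniqueness of \Ref{Lemma}{theo:linearedgl} to the same end), integrate once, and identify $c=3\eta\,\xi^{2/3}\leqslant 0$. Your extra care about uniqueness when a solution might vanish somewhere is a welcome refinement but not a different idea.
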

\begin{proof}
 Separation of variables gives the unique solution of \Lawp\  for $\xi>0\geqslant\eta$: 
 \begin{equation}\label{eq:ylimit}
  \ylimit(x)=\eta\cdot\exp\left(-\int_{\xi}^{x}\frac{2}{3t}\,dt\right) =                                                                               \eta\cdot\exp\left(-\frac{2}{3}\ln\frac{x}{\xi}\right) = \eta\,\xi^{2/3}\cdot x^{-2/3}.
 \end{equation}
 Because $\eta\,\xi^{2/3}\leqslant 0$, we have $\ylimit\leqslant 0$, hence \Refnrohne{eq:dgluglfuerproblem}.
 The claim follows by integration of $\flimit'=\ylimit$ with $c:=3\eta\,\xi^{2/3}$ and constant $d$.
\end{proof}

If we consider an interval adjacent to $\xi$ on the left hand side, i.e. $\widetilde{J}:=[\xi-a,\xi]$, the conditions for a function $y$ to be a sub- (or super)solution to $y'=F(x,y),\, y(\xi)=\eta$ are
  \[
   v'\,\begin{Bmatrix}>\\ \geqslant\end{Bmatrix}\, F(x,v)\text{ in }\widetilde{J}\, ,\quad v(\xi)\leqslant\eta \qquad \left(\text{or }     w'\,\begin{Bmatrix}<\\ \leqslant\end{Bmatrix}\, F(x,w)\text{ in }\widetilde{J}\, ,\quad w(\xi)\geqslant\eta \text{ respectively}\right)\; ,
  \]  
where (cf. \Lawp) 
\(
 F_{2/3}(x,y):=-\,\frac{2}{3x}\cdot y\in C^{\infty}(\Reell_+\times\Reell)\; 
\)
yields 
\begin{folgerung}\label{lem:oberunterlemma}
  Let $y$ be differentiable in $\Reell_+$,  $\xi > 0\geqslant\eta$. Then
  \[
   y'\,\begin{Bmatrix}>\\ \geqslant\end{Bmatrix}\, F_{2/3}(x,y)\, ,\; y(\xi)\geqslant\eta \Impl 
      \; y(x)\,\begin{Bmatrix}>\\[0.5ex] \geqslant\end{Bmatrix}\, \ylimit(x)=\eta\,\xi^{2/3}\cdot x^{-2/3}\;\text{ on }\,                               \begin{Bmatrix}\,(\xi,\infty)\,\\[0.5ex]\,[\xi,\infty)\,\end{Bmatrix}\; .
  \]
  Analogously:
  \[
   y'\,\begin{Bmatrix}>\\ \geqslant\end{Bmatrix}\, F_{2/3}(x,y)\, ,\quad y(\xi)\leqslant\eta \quad \Impl \quad                               y(x)\,\begin{Bmatrix}<\\[0.5ex] \leqslant\end{Bmatrix}\, \ylimit(x)\;\text{ on }\,                                                       \begin{Bmatrix}\,(0,\xi)\,\\[0.5ex]\,(0,\xi]\,\end{Bmatrix}\; .
  \]
\end{folgerung}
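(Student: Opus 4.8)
The plan is to reduce the whole statement to the monotonicity of a single auxiliary function built from the integrating factor of \Lawp. For a function $y$ differentiable on $\Reell_+$, I would set
\[
 z(x):=x^{2/3}\,y(x),\qquad x\in\Reell_+,
\]
and compute, by the product rule,
\[
 z'(x)=x^{2/3}\,y'(x)+\frac{2}{3}\,x^{-1/3}\,y(x)=x^{2/3}\!\left(y'(x)+\frac{2}{3x}\,y(x)\right)=x^{2/3}\bigl(y'(x)-F_{2/3}(x,y(x))\bigr).
\]
Since $x^{2/3}>0$ on $\Reell_+$, the sign of $z'(x)$ agrees with the sign of $y'(x)-F_{2/3}(x,y(x))$. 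Thus a weak supersolution, $y'\geqslant F_{2/3}(x,y)$, yields a non-decreasing $z$, and a strict supersolution, $y'>F_{2/3}(x,y)$, yields a strictly increasing $z$.

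Next I would observe that the limiting solution is exactly the one for which $z$ is constant: by \Refnrohne{eq:ylimit} we have $\ylimit(x)\,x^{2/3}=\eta\,\xi^{2/3}$ for every $x\in\Reell_+$. Consequently the two initial conditions appearing in the lemma translate, after multiplication by $\xi^{2/3}>0$, into $z(\xi)\geqslant\eta\,\xi^{2/3}$ in the first case and $z(\xi)\leqslant\eta\,\xi^{2/3}$ in the second.

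The conclusion then follows at once from the monotonicity of $z$ and division by $x^{2/3}>0$. In the first case, for $x\geqslant\xi$ one gets $z(x)\geqslant z(\xi)\geqslant\eta\,\xi^{2/3}$, hence $y(x)\geqslant\eta\,\xi^{2/3}x^{-2/3}=\ylimit(x)$ on $[\xi,\infty)$; if $y$ is a strict supersolution, $z$ is strictly increasing, so $z(x)>z(\xi)\geqslant\eta\,\xi^{2/3}$ for $x>\xi$ and thus $y(x)>\ylimit(x)$ on $(\xi,\infty)$. In the second case, for $0<x\leqslant\xi$ one gets $z(x)\leqslant z(\xi)\leqslant\eta\,\xi^{2/3}$, hence $y(x)\leqslant\ylimit(x)$ on $(0,\xi]$, with the strict inequality on $(0,\xi)$ when $z$ is strictly increasing. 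The left endpoint $0$ is excluded throughout only because $x\mapsto x^{2/3}$ and $x\mapsto x^{-2/3}$ live on $\Reell_+$.

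I do not expect a genuine obstacle here: this is the classical linear comparison principle, and the factor $x^{2/3}$ turns it into a one-line monotonicity argument. The only care needed is bookkeeping — matching up the strict versus non-strict inequalities and the open versus half-open/closed intervals — but the computation of $z'$ above already encodes this correctly. As a sanity check one can also run the argument in the sub-/supersolution form on a left-adjacent interval $\widetilde{J}=[\xi-a,\xi]$, as indicated before the statement, and obtain the same bound.
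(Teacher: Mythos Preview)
Your argument is correct. The paper does not spell out a proof of this lemma; it simply sets up the sub-/supersolution framework immediately before the statement and treats the result as a direct consequence of standard comparison theory for first-order ODEs. Your integrating-factor computation $z(x)=x^{2/3}y(x)$, $z'(x)=x^{2/3}\bigl(y'(x)-F_{2/3}(x,y(x))\bigr)$ is precisely the elementary proof of that comparison principle in the linear case, so the approaches coincide.
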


By these considerations, we obtain information on the qualitative shape of the solutions to \eqref{eq:dgluglfuerproblem}. (At first discussing the shape of $y=f'$.)
Note that 
to fulfill $y\leq 0$, in Lemma \ref{lem:oberunterlemma} also $0\geqslant y(\xi)\geqslant\eta$ must be satisfied. 
For $\eta =0$, $\ylimit\equiv 0$ is the unique solution and intersects $y$ in $\xi$. ($0\geq y(\xi)\geq 0$)\\
For $0\geqslant y(\xi)>\eta$, $y$ and $\ylimit$ with initial value $\ylimit(\xi)=y(\xi)$ intersect in $\xi$. 
Hence we can consider $y(\xi)=\eta=\ylimit(\xi)$ only. Then
\begin{align*}
  y'\,\begin{Bmatrix}>\\ \geqslant\end{Bmatrix}\, -\,\frac{2}{3x}\cdot y=F_{2/3}(x,y)\, ,\quad y(\xi)=\eta
\end{align*}
implies
\begin{align*}
  y\,\begin{Bmatrix}>\\[0.5ex] \geqslant\end{Bmatrix}\, \ylimit\;\text{ on }\,\begin{Bmatrix}\,(\xi,\infty)\,\\[0.5ex]\,                   [\xi,\infty)\,\end{Bmatrix} \quad\text{and}\quad                                                                                              y\,\begin{Bmatrix}<\\[0.5ex] \leqslant\end{Bmatrix}\, \ylimit\;\text{ on }\,\begin{Bmatrix}\,(0,\xi)\,\\[0.5ex]\,                        (0,\xi] \,\end{Bmatrix}\;\,.
\end{align*}
Additionally, the graphs of solutions $\ylimit$ contain the points $(1,\eta\,\xi^{2/3})$. Hence there is no need to consider initial values different from $y(1)=\eta=\ylimit(1)$ for $\eta\leqslant 0$.	 

Therefore all (derivatives $y$ of) solutions to \Dglugl\ qualitatively have the shape of the dashed line. ($y\leqslant 0$, $y'=f''\geqslant 0$, furthermore $f$ and $\flimit$ have the same slope in $1$). In $(0,1)$, however, $f$ decreases more rapidly than $\flimit$, in $(1,\infty)$ less rapidly.

\begin{center}
\begin{tikzpicture}[scale=1.33,samples=200]
\draw[ultra thin,color=gray] (-0.1,-2.8) grid (9.9,0.3);
\draw[->] (-0.2,0) -- (10.1,0) node[right] {$x$};
\draw[->] (0,-3) -- (0,0.5) node[above] {};
\draw (1pt,-1.5 cm) -- (-1pt,-1.5 cm) node[anchor=east] {$\eta$};
\draw (0cm,1pt) -- (0cm,-1pt) node[anchor=north east] {$0$};
\draw (1cm,1pt) -- (1cm,-1pt) node[anchor=north] {$1$};
\draw[dashed,color=black] plot[domain=0.5:9.605] (\x,{(-1.5)*(1/\x)});
\draw[thick,color=red] plot[domain=1:9.605] (\x,{(-1.5)*(1/\x^(2/3))}) node[below] {$\boldsymbol{\ylimit}$};
\draw[thick,color=red] plot[domain=0.3:1] (\x,{(-1.5)/(1/\x^(2/3))}) node[below] {};
\end{tikzpicture}
\end{center}

The question now is: Are there other solutions of \eqref{eq:dgluglfuerproblem}?
Note that e.g. the attempt to find a solution by solving $y'=\widetilde{F}(x,y):=F_{2/3}(x,y)+\varepsilon\;$ for some positive $\varepsilon$ leads to solutions that satisfy $f\leq 0$ in a bounded neighbourhood of $\xi$ only and not on the whole of $\Reell_+$.
However, $\widetilde{F}(x,y):= F_{(2/3)\,+\,a}(x,y)=-\,\frac{y}{x}\cdot\left(\frac{2}{3}+a\right)=-\frac{3a+2}{3x} y$ for $a\geq 0$ gives  
$ y_a(x)=\eta\cdot\exp\left(-\int_1^x \frac{3a+2}{3t}\,dt \right)=\eta\cdot x^{-\left(\frac{2}{3}+a\right)}$ as solution to
\begin{align*}
   y_a'=-\,\frac{3a+2}{3x}\, y_a 
                 \quad \text{and}\quad & y_a(1)=\eta\leqslant 0
\end{align*}
and hence we obtain the following family of solutions to \eqref{eq:dgluglfuerproblem}:

\begin{lemma}\textsc{(Family of solutions)}\label{theo:loesungsscharfuerproblem}
    For arbitrary $c\leqslant 0,\,d\in\Reell$, $a\in [0,\infty)$, the family of functions that is defined on $\,\Reell_+$ by
    \begin{align*}
     f_a(s):=\begin{cases}d+c\cdot s^{\frac{1}{3}-a}   &,\,\text{ for }a\in[0,1/3)\\
                          d+c\cdot\ln s                &,\,\text{ for }a=1/3\\
                          d-c\cdot s^{\frac{1}{3}-a}   &,\,\text{ for }a\in(1/3,\infty)   
     \end{cases}
    \end{align*}
    has the property that $f_a\circ \det\colon \PSym(3)\to\Reell,\;C \mapsto f_a(\det C)$ is convex. 
    \end{lemma}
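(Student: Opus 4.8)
The plan is to recognize the claim as a direct application of the main characterization. Each branch of $f_a$ is manifestly of class $C^2$ (indeed $C^\infty$) on $\Reell_+$, so Theorem~\ref{theo:dgluglfuerproblem}, specialized to $n=3$, applies and tells us that $f_a\circ\det$ is convex on $\PSym(3)$ as soon as
\[
 f_a''(s)+\frac{2}{3s}\,f_a'(s)\geqslant 0\quad\text{and}\quad f_a'(s)\leqslant 0\qquad\alle s\in\Reell_+ ,
\]
which is precisely condition \eqref{eq:dgluglfuerproblem} for $n=3$. Thus the whole proof reduces to verifying these two inequalities for the three explicit branches of $f_a$; no converse is needed, since only convexity is asserted.

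First I would set aside the degenerate case $c=0$, where $f_a\equiv d$ is constant and hence convex. For $c<0$ I would differentiate once and observe that all three branches collapse to the single formula $f_a'(s)=\eta_a\,s^{-(2/3+a)}$ on $\Reell_+$, with $\eta_a=c(\tfrac13-a)$ when $a\in[0,\tfrac13)$, $\eta_a=c$ when $a=\tfrac13$, and $\eta_a=c(a-\tfrac13)$ when $a\in(\tfrac13,\infty)$. The sign flip of $c$ in the third branch is exactly what makes $\eta_a\leqslant 0$ in every case, which gives the second inequality $f_a'(s)\leqslant 0$ at once; it also explains why the $a=\tfrac13$ logarithm is the natural interpolating branch, the power $s^{1/3-a}$ degenerating there. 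Differentiating the unified expression once more,
\[
 f_a''(s)+\frac{2}{3s}\,f_a'(s)=\eta_a\Bigl[-\bigl(\tfrac23+a\bigr)+\tfrac23\Bigr]\,s^{-(5/3+a)}=-a\,\eta_a\,s^{-(5/3+a)}\;\geqslant\;0 ,
\]
since $a\geqslant 0$, $\eta_a\leqslant 0$ and $s^{-(5/3+a)}>0$; so both parts of \eqref{eq:dgluglfuerproblem} hold and the conclusion follows from Theorem~\ref{theo:dgluglfuerproblem}. Equivalently, one may simply recall from the discussion just before the statement that $y:=f_a'$ is the solution of $y'=-\frac{3a+2}{3s}\,y$ with $y\leqslant 0$, whence $y'+\frac{2}{3s}\,y=-\frac{a}{s}\,y\geqslant 0$.

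I do not expect a genuine obstacle here: the only thing that needs care is the bookkeeping of signs across the three parameter ranges — making sure the branches are glued so that $\eta_a$, hence $f_a'$, keeps a single sign — together with handling $a=\tfrac13$ separately because $s^{1/3-a}$ collapses to a constant there. Everything else is a one-line computation.
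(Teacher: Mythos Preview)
Your proposal is correct and follows essentially the same approach as the paper: the paper derives $f_a$ by integrating the ODE solution $y_a(x)=\eta\,x^{-(2/3+a)}$ discussed just before the lemma, and the verification that this $y_a=f_a'$ satisfies \eqref{eq:dgluglfuerproblem} is exactly your computation $y_a'+\tfrac{2}{3s}y_a=-\tfrac{a}{s}y_a\geqslant 0$. Your unified formula $f_a'(s)=\eta_a\,s^{-(2/3+a)}$ with the case-dependent $\eta_a\leqslant 0$ makes the sign bookkeeping across the three branches cleaner than the paper's somewhat implicit treatment.
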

    \begin{bemerkung}
      Although this condition is not necessary for the convexity of $f\circ\det$, at least qualitatively all solutions to \eqref{eq:dgluglfuerproblem} have the shape indicated in the graph below. (As we discussed in this section.)
   \end{bemerkung}
    
   \begin{tikzpicture}[scale=1.34,samples=240]
   \draw[ultra thin,color=gray] (-0.1,-2.5) grid (7.9,3.2);
   \draw[->] (-0.2,0) -- (8.1,0) node[right] {$s$};
   \draw[->] (0,-2.8) -- (0,3.5) node[above] {$f_a(s)$};
   \draw (0cm,1pt) -- (0cm,-1pt) node[anchor=north east] {$0$};
   \draw (1cm,1pt) -- (1cm,-1pt) node[anchor=north] {$1$};
   \draw[dashed,color=black] plot[domain=0.1:2.5] (\x,{-\x+1});
   \draw[color=green!40!blue] plot[domain=1:7.05] (\x,{3/(\x^(1/3))-3}) node[below,xshift=0.8mm,yshift=0.9mm] {$3s^{-1/3}-3$};
   \draw[color=green!90!black] plot[domain=0.04:7.05] (\x,{-ln(\x)}) node[below,xshift=0.9mm,yshift=0.9mm] {$-\ln s$};
   \draw[color=orange] plot[domain=1:7.05] (\x,{(-6)*(\x^(1/6))+6}) node[below,yshift=0.9mm] {$-6s^{1/6}+6$};
   \draw[thick,color=red] plot[domain=1:7.05] (\x,{(-3)*(\x^(1/3))+3}) node[below,yshift=1mm] {${\flimit(s)=-3s^{1/3}+3}$};
   \draw[color=green!40!blue] plot[domain=0.1:1] (\x,{3*(\x^(1/3))-3}) node[below,xshift=0.8mm,yshift=0.9mm] {};
   \draw[color=orange] plot[domain=0.04:1] (\x,{(-6)/(\x^(1/6))+6}) node[below,yshift=0.9mm] {};
   \draw[thick,color=red] plot[domain=0.001:1] (\x,{(-3)/(\x^(1/3))+3}) node[below,yshift=1mm] {};

   \end{tikzpicture}

\section*{Acknowledgement}
We thank John Ball for his interest in this result.

\bibliographystyle{amsplain}
\providecommand{\bysame}{\leavevmode\hbox to3em{\hrulefill}\thinspace}
\providecommand{\MR}{\relax\ifhmode\unskip\space\fi MR }
\providecommand{\MRhref}[2]{%
  \href{http://www.ams.org/mathscinet-getitem?mr=#1}{#2}
}
\providecommand{\href}[2]{#2}

\end{document}